\documentclass[twocolumn,prl,aps,showpacs]{revtex4-1}
\newcommand{\be}{\begin{equation}}
\newcommand{\ee}{\end{equation}}
\usepackage{graphicx}
\usepackage{color,amsthm}
\theoremstyle{plain}
\bibstyle{prsty}
\begin{document}
\title{Operational Interpretation of the Ratio of Total Correlations to Classical Correlations}
\pacs{03.67.-a, 03.65.Ta, 03.65.Ud.}
\author{Sai Vinjanampathy$^1$ and A. R. P. Rau$^2$}

\affiliation{ 
$^1$Centre for Quantum Technologies, National University of Singapore, 3 Science Drive 2, Singapore 117543.\\
$^2$Department of Physics and Astronomy, Louisiana State University, Baton Rouge, LA 70803, USA.
}

\begin{abstract}
We will discuss the generalization of entropic uncertainty principles in terms of a game. The game involves $k$-players, each measuring one of $k$ possible observables. The question is, what is the maximum number of players that can  play such that their joint entropic uncertainties are tightened by the presence of non-classical correlations? We answer this question and relate it to the ratio of quantum mutual information to classical correlations. This game hence serves to construct an operational interpretation of the aforementioned ratio. It provides another example of quantum correlations providing a quantum advantage.
\end{abstract}
\date{\today}
\maketitle
\section{Introduction}
Quantifying the difference between the classical and the quantum world has been a major goal of physics \cite{bell1966problem,haroche1998entanglement}. There have been many definitions of ``quantumness", that do not necessarily agree with each other. Various indicators of quantumness, namely violation of Bell's inequality, entanglement \cite{horodecki2009quantum}, quantum discord \cite{henderson2001classical,ollivier2001quantum}, etc., do not necessarily agree with each other. For instance, two-qubit Werner states \cite{werner1989quantum}, whose parameter $p$ that indicates the admixture of uniform noise, are classical from the standpoint of accommodating a
hidden variable theory when $p$ lies below $1/\sqrt{2}$. However, such states are considered quantum with respect to the presence of entanglement for $1/3 < p < 1/\sqrt{2}$. Furthermore, for values $0<p<1/3$, they have no entanglement but have non-zero discord \cite{ali2010quantum}, another measure of non-classicality. These discrepancies often simply point to the fact that different questions are being asked and answered with respect to what constitutes the ``quantum" world. 

Another such measure of quantumness has been the complementarity of observables in quantum physics. While in classical physics, one can measure both position and momentum simultaneously with arbitrary precision for such pairs of complementary observables, there is a bound on the accuracy of such measurements in quantum physics, which can be understood as a measure of non-classicality. On the other hand, contextuality reveals no such non-classicality for a single qubit \cite{grudka2008there,peres1995quantum}.

 We will present an operational definition of the ratio of the total correlations to the classical correlations in terms of an entropic uncertainty principle (EUP) \cite{Wehner}. In doing so, we will present a way to understand non-classicality in terms of these two measures. We note that the difference of total correlations to classical correlations is quantum discord \cite{henderson2001classical,ollivier2001quantum}, another measure of non-classicality.

We will begin by briefly introducing entropic uncertainty principles in Section 2. After introducing previous results on entropic uncertainty principles in the presence of quantum memory, we will present the  notion of ``generalized $k$-party uncertainty games" in Section 3. We will then present a new entropic uncertainty relation for $k$-party games and use it to prove the operational definition of the ratio between the total and classical correlations in Section 4. We present conclusions in Section 5.

\section{Entropic Uncertainties}
Non-commutativity is one of the hallmarks of quantum theory that sets it apart from classical mechanics. The Heisenberg \cite{Heisenberg} and Robertson \cite{Robertson} inequalities highlight that non-commuting observables cannot be simultaneously measured with arbitrary precision. The product of the uncertainties of any two simultaneously measured observables of a quantum system A is bounded by 
\be\label{rob}
\Delta R_{1}\Delta R_{2}\geq \frac{1}{2}\big|\langle [R_1,R_2]\rangle\big|,
\ee
where, $\Delta R_1:=\sqrt{\langle R_1^2\rangle-\langle R_1\rangle^2}$ is the standard deviation of the measurement and $[R_1,R_2]:=R_1R_2-R_2R_1$ is the commutator. 

Many authors found two aspects of the uncertainty principle in Eq.(\ref{rob}) unsatisfying: Deutsch \cite{Deutsch} argued that the standard deviation was a bad measure for multi-modal distributions and hence must be replaced with an entropy measure (see also \cite{Hirschmann, Beckner}) since it naturally quantifies the statistical variability of measurement outcomes. Deutsch also pointed out a second issue, that the right-hand side of Eq.(\ref{rob}) is state dependent. This was seen to be undesirable since the ``incompatibility" between any two operators should not depend on the states in which they are being measured. It should, Deutsch argues, be a property of the Hilbert space the operators are defined on. Otherwise, for a pair of observables in finite dimension, since there always exists a state which reduces the Robertson bound to zero \cite{Wehner}, that inequality becomes trivial. 

Deutsch addressed these issues by providing an entropic uncertainty relation that was bounded by a number that was only a function of the chosen observables, and not the state. This relation was improved upon by a conjecture by Kraus \cite{Kraus}, which was proven by Maassen and Uffnik \cite{MaassenUffnik}. They showed that the largest uncertainty was obtained when the operators were maximally non-commuting among themselves. Such measurements are called mutually unbiased measurements \cite{bandyopadhyay2002new}. This entropic uncertainty relation can be stated as
\be
H(R_1)+H(R_2)\geq-\log{c(R_1,R_2)}.
\ee
Here $H(R_{1,2})$ is the Shannon entropy of the probability distribution generated by the measurement $R_{1,2}$ and $c(R_1,R_2):=\max_{i,j}\vert\langle r^{(1)}_{i}\vert r^{(2)}_{j}\rangle\vert^2$. $\vert r^{(1,2)}_{i}\rangle$ are eigenvectors of $R_{1,2}$. Hence $c(R_1,R_2)$ represents the maximum overlap between any two eigenvectors with one drawn from each set.

In \cite{BertaRenner}, Berta \textit{et.al.,} extended this entropic uncertainty relation to include entanglement. To discuss this extension, they introduced the notion of ``uncertainty games." A two-person uncertainty game involves many copies of a bipartite state $\rho_{AB}$. Alice gets one of the subsystems, say A, from each copy. She chooses one of two measurement operators $R_1$ and $R_2$ and performs repeated measurements with A. Now, given that Bob has subsystem B, which may be entangled to A, he enjoys access to more information than was available in the absence of the memory subsystem A. Berta \textit{et.al.,} showed that this information serves to lower the uncertainty bound when Bob measures the other operator that Alice did not choose. Their improved entropic uncertainty relation is given by
\be
S(R_1\big| B)+S(R_2\big| B) \geq-\log{c(R_1,R_2)}+S(A\big|B),
\ee
where $S(R_1\big| B)$ is the conditional von Neumann entropy associated with the measurement $R_1$, given the presence of the memory B. The additional term, $S(A\big| B)$, represents the conditional entropy of the state $\rho_{AB}$. When this conditional entropy is negative in the presence of entanglement, the last term lowers the bound.

This result was further extended by \cite{PatiUshaWilde} who noted that the presence of quantum correlations beyond a certain value serves to tighten the bound. This entropic relation is given by
\begin{widetext}
\be
S(R_1\big| B)+S(R_2\big| B) \geq-\log{c(R_1,R_2)}+S(A\big|B)+\mathrm{\max}{\{0,D_{A}-\mathcal{C}\}}.
\ee
\end{widetext}
Here $D_{A}$ is the one-way quantum discord associated with measurements on A and $\mathcal{C}$ is the maximum conditional information associated with the post-measurement state, and is referred to as  the classical correlation associated with the measurements on system A. Note that in \cite{PatiUshaWilde}, $\mathcal{C}$ is referred to as $J_{A}$. Note also that while $S(A\big|B)$ could be negative and hence decrease the uncertainty bound, the last term, $\mathrm{\max}{\{0,D_{A}-\mathcal{C}\}}$ serves to always tighten the bound.

\section{$k$-Party uncertainty game}
We wish to generalize this uncertainty relation and use it to get an operational interpretation of the ratio of quantum to classical correlations. We begin by recollecting some important definitions. First, the quantum mutual information is given by
\be
\mathcal{I}=S(A)-S(A\big| B).
\ee
The maximum conditional entropy associated with the post measurement state, namely the classical correlations, are given by
\be
\mathcal{C}=S(B)-\min\displaystyle\sum_{i}p_{i}S(\rho_{i}).
\ee

Here $p_i$ are the probabilities of obtaining a measurement result corresponding to effect operator $E_i=(K^{\dagger}_i\otimes\mathbf{I})(K_i\otimes\mathbf{I})$ and $\rho_i$ is the corresponding state $\rho_i=(K_i\otimes\mathbf{I})\rho_{AB}(K^{\dagger}_i\otimes\mathbf{I})/p_i$. Now, this leads to the equation
\be
\mathcal{I}-k\mathcal{C}=S(A)-S(A\big| B)-kS(B)+k\min\displaystyle\sum_{i}p_{i}S(\rho_{i}).
\ee
or, equivalently,
\be\label{minusk}
-kS(B)=\mathcal{I}-k\mathcal{C}-S(A)+S(A\big| B)-k\min\displaystyle\sum_{i}p_{i}S(\rho_{i}).
\ee

We are now in a position to play a $k$-party game. The game is stated as follows: Assume Bob has access to many copies of a bipartite state $\rho_{AB}$, while $(k-1)$-players named Alice$_1$ through Alice$_{k-1}$ access the physical system A. (Alternatively, a single Alice may measure $(k-1)$ different operators.) Each Alice picks from a list of $k$ operators, labelled $R_{1}$ through $R_{k}$, one operator not previously chosen, and measures it. This results in say Alice$_1$ measuring $R_5$, Alice$_2$ measuring $R_{3}$ (because $R_5$ was already chosen), and so on. When all $(k-1)$ Alices have picked their unique operators from the list of $k$ operators, Bob is left with one final operator. Now, given that Alice$_{1}$'s measurement yielded an entropic uncertainty of $S(R_5\big| B)$, Alice$_{2}$'s a similar uncertainty of $S(R_{3}\big|B)$, and so on, \textit{what is the entropic uncertainty bound on the operator Bob measured}?
\begin{center}
\begin{figure}
\begin{center}  
\includegraphics[height=65mm,width=65mm]{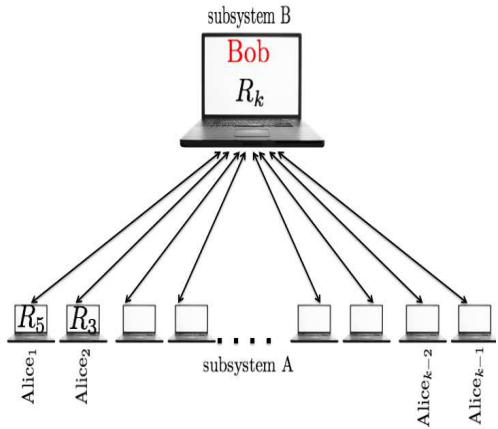}  
\caption{Bob has system B of a bipartite state $\rho_{AB}$. Each Alice possesses a copy of the sub-system A. Each of the $(k-1)$ Alices must pick from a list of $k$ measurements operators, one operator she chooses to measure, leaving Bob with one final measurement operator. The figure illustrates Alice$_1$ picking the measurement operator $R_5$, Alice$_2$ choosing $R_3$, and so on, resulting in Bob being left with the measurement operator $R_k$.}  
\label{IoC}
\end{center}  
\end{figure}
\end{center} 

To answer this question, we start by writing the following statement about the conditional entropies associated with each measurement, namely
\be 
S(R_{i}\big| B)=H(R_{i})+\displaystyle\sum_{j}p^{(i)}_{j}S(\rho^{(i)}_{j})-S(B).
\ee
Here $p^{(i)}_{j}$ measures the probability of the $j$th outcome when measurement $R_i$ is performed and likewise $\rho^{(i)}_{j}$ is the associated post-measurement density matrix. Adding all $k$ terms up, we get
\be
\displaystyle\sum_{i=1}^{k}S(R_{i}\big| B)=\displaystyle\sum_{i=1}^{k}\left(H(R_{i})+\displaystyle\sum_{j}p^{(i)}_{j}S(\rho^{(i)}_{j})\right)-kS(B).
\ee
Substituting for the last term $-kS(B)$ from Eq.(\ref{minusk}), we get
\begin{widetext}
\be \label{central}
\displaystyle\sum_{i=1}^{k}S(R_{i}\big| B)=\left\{\displaystyle\sum_{i=1}^{k}H(R_{i})-S(A)\right\}+S(A\big| B)+\Bigg(\mathcal{I}-k\mathcal{C}\Bigg)+\left(\displaystyle\displaystyle\sum_{i=1}^{k}\sum_{j}p^{(i)}_{j}S(\rho^{(i)}_{j})-k\min\displaystyle\sum_{l}p_{l}S(\rho_{l})\right).
\ee
\end{widetext}

The right-hand side of the equation is composed of four terms, grouped under different parentheses. The first term is the sum of the Shannon entropies associated with the probability distribution generated by each measurement minus $S(A)$ and has to be optimized over the set of all states. Before discussing this in detail, we will briefly discuss the other terms. The second term is the conditional entropy introduced by Berta \textit{et.al.}, whose negativity heralds a quantum advantage. The third term is the difference between the mutual information and $k$ times the classical correlations. The last term is a positive constant as seen by noting that since the minimization is over the set of all POVMs, the quantity $\min\sum_{l}p_{l}S(\rho_{l})$ is at least as big as the smallest quantity in the sum $\sum_{i=1}^{k}\sum_{j}p^{(i)}_{j}S(\rho^{(i)}_{j})$, which represents the post-measurement entropies associated with $k$-measurements. 

Now, returning to the first term in Eq.(\ref{central}), when minimized over all states, for $k=2$, and for projective measurements, it yields $-\log(c)$, the result by Maassen and Uffnik \cite{MaassenUffnik}. Its generalization to $k$ generalized measurements can be very hard to obtain since it involves a complex optimization problem. The answer in specific cases involving mutually unbiassed bases is well known \cite{Wehner}. Since the sum $\sum_{i=1}^{k}H(R_{i})-S(A)$ must scale with the number of measurements, we can in general write $\sum_{i=1}^{k}H(R_{i})-S(A)=-k\log{(c')}$, where $c'$ absorbs all of the details of the measurements. For $k=2$, $2\log{c'}=\log{c}$. Note that $-k\log(c')+S(A\big| B)\geq0$, since it has already been proven for the case of $k=2$. If we drop the last term, Eq.(\ref{central}) can be replaced with an inequality.

This makes the $k$-party EUP to read
\begin{widetext}
\be\label{k-part}
\displaystyle\sum_{i=1}^{k}S(R_{i}\big| B) \geq -k\log(c')+S(A\big| B)+\mathcal{I}-k\mathcal{C}.
\ee
\end{widetext}
Note that when $\mathcal{I}-k\mathcal{C}<0$, which happens for increasing $k$, for a given value of $\mathcal{I}$ and $\mathcal{C}$, the inequality becomes increasingly uninteresting. This is due to the fact that for any such inequality, the addition of an increasingly negative term on the right hand side of the inequality trivially makes the inequality stronger. This completes the proof.

\section{Operational Interpretation of $\mathcal{I}/\mathcal{C}$}
Eq.(\ref{k-part}) brings us to the following theorem:
\newtheorem{kk}{Theorem}
\begin{kk}
In a $k$-party uncertainty game, the maximum number of players that can play before the advantage from non-classical correlations in tightening the bound is lost is given by $k_{opt}=\lfloor\mathcal{I}/\mathcal{C}\rfloor$.
\end{kk}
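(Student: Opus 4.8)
The plan is to read the theorem directly off the $k$-party entropic uncertainty relation in Eq.(\ref{k-part}), treating the correlation-dependent contribution $\mathcal{I}-k\mathcal{C}$ as the sole carrier of the quantum advantage and then asking for which integers $k$ that contribution still helps. First I would separate, on the right-hand side of Eq.(\ref{k-part}), the pieces that do not depend on the amount of correlation shared between $A$ and $B$ — the state-optimized term $-k\log(c')$ and the conditional entropy $S(A\big|B)$ — from the single term $\mathcal{I}-k\mathcal{C}$ that does. The first two together are already known to be non-negative (they reduce to the Maassen--Uffink/Berta \textit{et al.} bound for $k=2$), so the tightening produced \emph{specifically} by the non-classical correlations is governed by the sign of $\mathcal{I}-k\mathcal{C}$.

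Next I would establish that positivity of this term is the correct operational criterion for ``advantage'' by reducing to $k=2$ and checking consistency with Pati \textit{et al.} Using $\mathcal{I}=S(A)-S(A\big|B)$ and the discord identity $D_A=\mathcal{I}-\mathcal{C}$, one finds $\mathcal{I}-2\mathcal{C}=D_A-\mathcal{C}$, so for $k=2$ the term $\mathcal{I}-k\mathcal{C}$ coincides exactly with the quantity $D_A-\mathcal{C}$ whose positive part $\max\{0,D_A-\mathcal{C}\}$ supplies the quantum advantage in the two-party game. This identification fixes the threshold: the bound is genuinely tightened by the correlations precisely when $\mathcal{I}-k\mathcal{C}\geq 0$, and the advantage is lost once $\mathcal{I}-k\mathcal{C}<0$, at which point (as already noted below Eq.(\ref{k-part})) the extra term only weakens the inequality.

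With the criterion in hand the optimization is elementary. I would solve $\mathcal{I}-k\mathcal{C}\geq 0$ for integer $k$: since $\mathcal{C}>0$, this is equivalent to $k\leq \mathcal{I}/\mathcal{C}$, and the largest integer obeying it is $k_{opt}=\lfloor\mathcal{I}/\mathcal{C}\rfloor$, the claimed result. Here $k$ counts the total number of measurement operators, i.e. the $(k-1)$ Alices together with Bob, so $k_{opt}$ is exactly the maximal number of players who may join while the correlations still help.

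The main obstacle is not the arithmetic but the \emph{conceptual} step of justifying that the sign of $\mathcal{I}-k\mathcal{C}$, and nothing else, decides whether the non-classical correlations are useful. In particular one must argue that the positive constant dropped in passing from the equality Eq.(\ref{central}) to the inequality Eq.(\ref{k-part}) cannot resurrect an advantage once $\mathcal{I}-k\mathcal{C}$ has turned negative — that is, that the correct accounting of the ``correlation advantage'' resides in the $\mathcal{I}-k\mathcal{C}$ term rather than in the state-dependent post-measurement entropies. A secondary subtlety is the boundary case in which $\mathcal{I}/\mathcal{C}$ is exactly an integer: there $k=\mathcal{I}/\mathcal{C}$ gives a vanishing contribution, so whether this value counts as ``advantageous'' depends on the non-strict convention implicit in $k_{opt}=\lfloor\mathcal{I}/\mathcal{C}\rfloor$; I would note that for generic states $\mathcal{I}/\mathcal{C}$ is not an integer and the floor is unambiguous.
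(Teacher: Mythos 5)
Your proposal is correct and follows essentially the same route as the paper: identify $\mathcal{I}-k\mathcal{C}$ in Eq.(\ref{k-part}) as the sole correlation-dependent contribution, set the threshold at $\mathcal{I}-k\mathcal{C}=0$, and take the largest integer $k$ satisfying $k\leq\mathcal{I}/\mathcal{C}$, giving $k_{opt}=\lfloor\mathcal{I}/\mathcal{C}\rfloor$. The only substantive difference is that where you simply assume $\mathcal{C}>0$, the paper justifies discarding the $\mathcal{C}=0$ case by noting $D_{A}>0\Leftrightarrow\mathcal{C}>0$ together with the fact that the advantage requires $D_{A}>\mathcal{C}$; your $k=2$ consistency check against the Pati \emph{et al.} term $D_{A}-\mathcal{C}$ is a sensible addition not present in the paper's own proof.
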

\begin{proof}
Since $\mathcal{I}-k\mathcal{C}\leq0$ would mean that the corresponding term will be zero, the solution to the equation $\mathcal{I}-k\mathcal{C}=0$ sets the limit to the maximum number of players who can play the game and extract an advantage from quantum correlations. This solution is trivially given by $k_{opt}$. Note that since $D_{A}>0\Leftrightarrow \mathcal{C}>0$ \cite{modi2011quantum}, and since the advantage is only available if $D_{A}>\mathcal{C}$, the limit of zero classical correlations is not relevant.
\end{proof}

We thus have an operational interpretation of the ratio of quantum to classical correlations. Note that although $\mathcal{I}$ is bounded by $\mathcal{I}\leq2\min(\log(d_A),\log(d_B))$, $\mathcal{C}$'s lower bound is simply zero. This lower bound is not relevant as explained in the proof. In fact, although the tightening of the bound by non-classical correlations implies the presence of quantum discord, the presence of quantum discord is insufficient to herald a tightening of the inequality.

Finally, we note an interesting limit. In the limit of vanishing correlations, since $J_A$ may vanish faster than $D_A$, an infinite number of players can play the game.
\begin{center}
\begin{figure}[h!]  
\begin{center}  
\includegraphics[height=43.3mm,width=70mm]{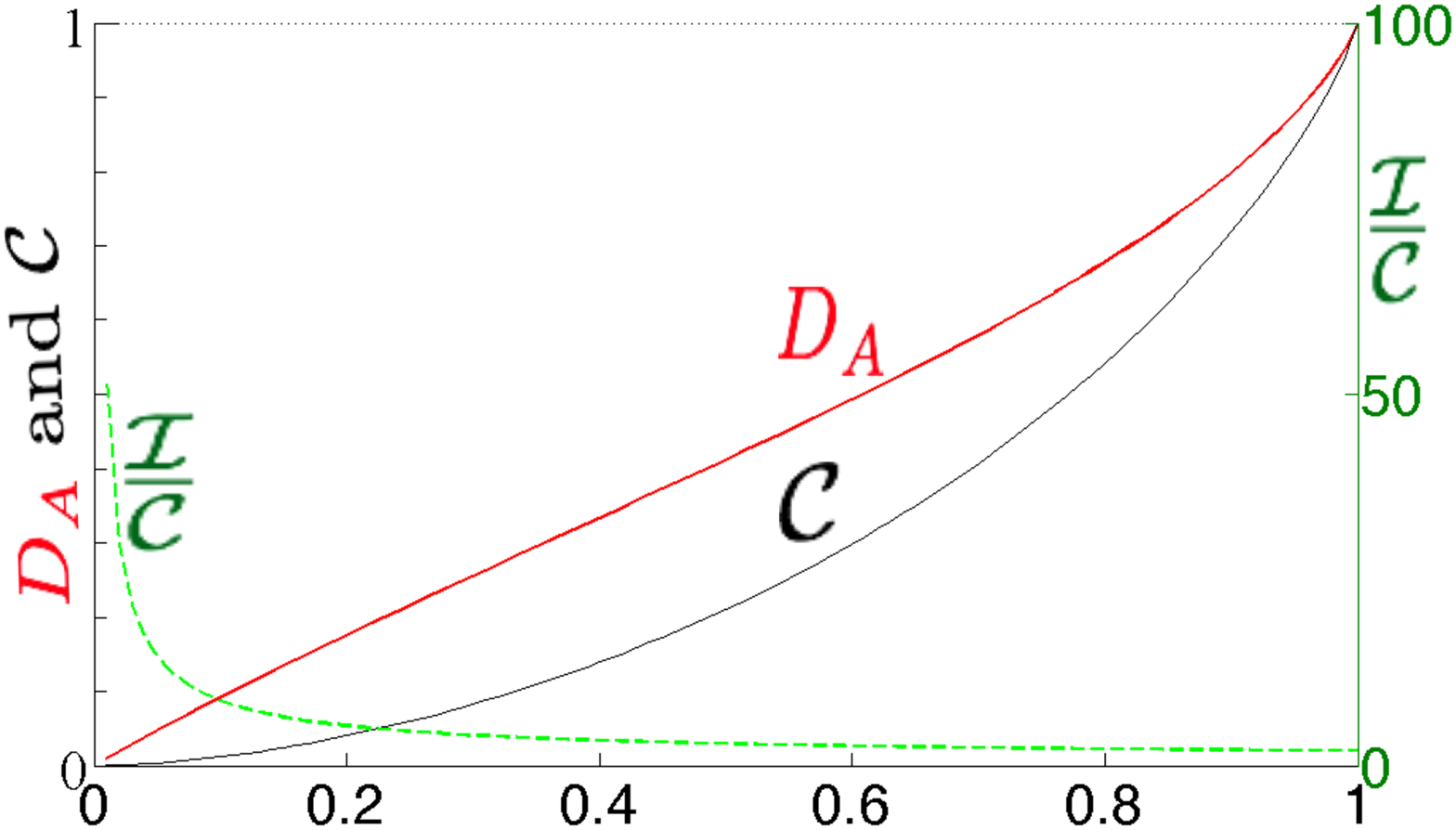}  
\caption{Quantum discord($D_A$), classical correlations ($J_A$) and the ratio of total correlations ($\mathcal{I}$) to $J_A$ plotted for the state $\rho=a\vert\psi^+\rangle\langle\psi^+\vert+(1-a)\vert11\rangle\langle11\vert$ for various values of the parameter $a$. Note that at $a=0.01$, $\lfloor\mathcal{I}/J_A\rfloor=52$.}  
\label{IoC}
\end{center}  
\end{figure}
\end{center} 

An example is provided by the state \cite{ali2010quantum} 
\be
\rho=a\vert\psi^{+}\rangle\langle\psi^{+}\vert+(1-a)\vert11\rangle\langle11\vert,
\ee
where $\vert\psi^{+}\rangle=(\vert01\rangle+\vert10\rangle)/\sqrt{2}$. In Fig. (\ref{IoC}), we plot the ratio of total to classical correlations. The ratio of total to classical correlations is rather large for small values of the parameter $a$. Indeed, for a=0.01, the ratio of total to classical correlations is $\approx$52.02. This means that a maximum of 52 players can play the uncertainty game. Furthermore, if we choose the three standard mutually unbiassed bases namely $(\vert0\rangle,\vert1\rangle)$, $(\vert\pm x\rangle)$, $(\vert\pm y\rangle)$, then the lower bound $min(\sum_i H(R_i))$ was shown \cite{sanches1998optimal} to be $2$. $S(A)=0.0454$, $S(A|B)$ is 0.0354 and hence $\mathcal{I}=0.01$.

\section{conclusions}
In this work, we presented an operational interpretation of the ratio of total correlations to classical correlations. Such interpretations have often been pursued to gain a better understanding of non-classical measures and correlations \cite{dodd2002simple,boixo2008operational,cavalcanti2011operational,madhok2011interpreting}. We introduced the notion of $k$-party uncertainty games wherein the objective of the game was to minimize Bob's measurement uncertainty, given that $(k-1)$ Alices had performed individual measurements on their subsystem. Our analysis shows how this ratio can be interpreted in terms of Bob minimizing the total uncertainty of post measurement entropies of $k$ joint measurements. To minimize the total post measurement entropy, while tightening the uncertainty bound, Bob can include at most $k_{opt}=\lfloor\mathcal{I}/\mathcal{C}\rfloor$ number of Alices. We note that such a tightening of the EUP will find applications in problems such as tightening of the security of cryptographic protocols and memory assisted measurements of multiple observables.

\begin{acknowledgments}
ARPR thanks the Alexander von Humboldt Foundation for support through a Wiedereinladung and Dr. Gernot Alber of Technische Universit\"{a}t, Darmstadt, for hospitality while this work was completed. This work was partially completed when SV was at the University of Massachusetts, Boston during which time he was supported by NSF under Projects No. PHY -0902906 and PHY-1005571. SV acknowledges discussions with Kavan Modi. The Centre for Quantum Technologies is supported by National Research Foundation and Ministry of Education, Singapore.
\end{acknowledgments}
\bibliography{EUP_As_Games_v2}
\end{document}